\newtheorem{lemma}{Lemma}
\newtheorem{theorem}{Theorem}
\newtheorem{corollary}{Corollary}
\begin{document}

\title{On  transitive uniform partitions of
$F^n$ \\ into binary Hamming codes}
\author{{F. I. Solov'eva}}%

\address{Faina I. Solov'eva
\newline\hphantom{iii} Sobolev Institute of Mathematics,
\newline\hphantom{iii} pr. ac. Koptyuga 4,
%\newline\hphantom{iii} Novosibirsk State University,%
%\newline\hphantom{iii} Pirogova street~2,
\newline\hphantom{iii} 630090, Novosibirsk, Russia}%
\email{sol@math.nsc.ru}%

%\thanks{\sc Ivanov, I.I., On some problems of commutative algebra}
\thanks{\copyright \ 2018   F. I. Solov'eva}
\thanks{\rm The work has been
supported by RFBR Grant 19-01-00682.}
%\thanks{\it Џ®бвгЇЁ«  1 ­®пЎап 2012 Ј., ®ЇгЎ«ЁЄ®ў ­  1 ¤ҐЄ Ўап 2012 Ј.}%

%\top

%\vspace{1cm}

\maketitle

\begin{quote}
{\small \noindent{\sc Abstract. } We investigate transitive
uniform partitions of the vector space $F^n$ of dimension $n$ over
the Galois field $GF(2)$ into cosets of Hamming codes. A partition
$P^n= \{H_0,H_1+e_1,\ldots,H_n+e_n\}$ of $F^n$ into cosets of
Hamming codes  $H_0,H_1,\ldots,H_n$ of length $n$ is said to be
uniform if the intersection of any two codes $H_i$ and $H_j$,
$i,j\in \{0,1,\ldots,n \}$ is constant, here $e_i$ is a binary
vector in $F^n$ of weight $1$ with one in the $i$th coordinate
position.
 For any $n=2^m-1$, $m>4$ we found a  class of nonequivalent  $2$-transitive uniform partitions of  $F^n$ into cosets of Hamming
codes.

\medskip

\noindent{\bf Keywords:}  Hamming code, partition, uniform
partition into Hamming codes, transitive partition, $2$-transitive
partition,  Reed -- Muller code, dual code}
\end{quote}

\section{Introduction}

By  $F^n$ we denote the vector space of dimension $n$ over the
Galois field $GF(2)$ with respect to the Hamming metric. In this
short correspondence using a recursive construction we prove the
existence of nonequivalent $2$-transitive uniform partitions of
$F^n$ into cosets of Hamming codes for any length $n=2^m-1$,
$m>4$.

 The {\em Hamming distance} $d(x,y)$ between
any two vectors $x,y\in F^n$ is defined as the number of
coordinates in which $x$ and $y$ differ. {\it The Hamming weight}
$w(x)$ of a vector $x$ is $d(x,{\bf 0}^n)$, where ${\bf 0}^n$ is
the all-zero vector of length  $n$. A {\it code} of length $n$ is
a subset of $F^n$, its elements  are called {\it codewords}. The
{\it code distance} of a code is the minimum value of the Hamming
distance between two different codewords from the code. A code $C$
is called {\it perfect binary single-error-correcting}
 (briefly  {\it perfect}) if for any vector $x$ from the set $F^n$  there
 exists exactly one vector $y
\in C$ at the Hamming distance not more than 1 from the vector
$x$. A perfect linear code is called the {\it Hamming code}.

{\it The automorphism group of a partition} $P^n=
\{C_0,C_1,\ldots,C_n\}$ of length $n$ of  $F^n$ into perfect codes
$C_0,C_1,\ldots,C_n$ of length $n$, $\bigcup\limits_{i=0}^{n}C_i =
F^n$ is defined as the group of all isometries of $F^n$ preserving
the partition $P^n$. A partition $P^n$ is called {\it transitive},
if for any two codes $ C_i$ and $  C_j$, $i, j$ from $I=
\{0,1,\ldots,n \},$ there is an automorphism $\sigma$ from $
\mbox{Aut}(P^n)$ such that $\sigma(C_i) = C_j.$  A partition $P^n$
of $F^n$ is defined to be {\it $2$-transitive}, if for any two
subsets $\{i_1,i_2\}$ and $\{j_1,j_2\}$ of $I$ there exists an
automorphism $\sigma$ from $Aut(P^n)$ such that $\sigma(C_{i_t}) =
C_{j_t}, t=1,2.$  By definition any $2$-transitive partition is
transitive.
 Let $e_i$
be a binary vector in $F^n$ of weight $1$ with one in the $i$th
coordinate  position. A partition $P^n=
\{H_0,H_1+e_1,\ldots,H_n+e_n\}$ of $F^n$ into cosets of Hamming
codes  $H_0,H_1,\ldots,H_n$ of length $n$  is said to be {\it
uniform} if the intersection of any two  codes $H_i$ and $H_j$,
$i,j\in I$ is constant, the number $\log\eta_n$ is called the {\it
uniformity number}. Here and below $\log$ stands for the binary
logarithm. Note that transitive partitions are not necessarily
uniform.

In~\cite{HS09} several classes of partitions of the space $F^n$
into mutually nonparallel cosets of Hamming codes were presented
and the lower bound on the number of nonequivalent partitions was
given. The partitions in these constructions were not generally
transitive. In~\cite{SG2011}  constructions of transitive,
vertex-transitive and $2$-transitive partitions of $F^n$ into
perfect codes and lower bounds on the number of nonequivalent such
partitions  were given. Note that the  partitions presented
in~\cite{HS09,SG2011} were not uniform.
 Uniform  partitions of $F^n$ into cosets
of Hamming codes and into extended Hamming codes with the smallest
possible  $\eta_n$ were constructed for length $n=7$ by Phelps
in~\cite{Phelps} and for any $n=2^m$ for odd $m>3$, using the Gold
function by Krotov in~\cite{Krotov}. In \cite{S2007} partitions
into pairwise nonparallel Hamming codes were constructed and among
them there were uniform partitions.  An overview of  results till
1998 on utilizing partitions to construct $q$-ary perfect codes
can be found in~\cite{CHLL}. See~\cite{Sol} for a survey
concerning some new results on partitions and all other necessary
definitions and notions.

The investigation of the partitions of $F^n$ into perfect codes is
 important due to the connection of the
classification problem of all partitions with the analogous
problem for perfect binary codes. It is known that  the limit for
the relation of double logarithms of the numbers of different
perfect binary codes and different partitions equals $1$, although
the number of nonequivalent partitions significantly exceeds the
number of nonequivalent codes.  Since an extended   Hamming code
of length $n$ is  the  Reed -- Muller code of the same length and
order $n-2$, see \cite{MWSl}, it is natural to investigate the
problem of constructing partitions of $F^n$ into Reed -- Muller
codes of any admissible order.  The intersection of two Hamming
codes often gives a good cyclic code, see, for example,
\cite{MWSl}. Moreover partitions are connected with the  perfect
colorings called also regular codes, partition designs or
equitable partitions~\cite{FdF}. In some cases partitions of $F^n$
into codes induce colorings associated with fibre optic
nets~\cite{Ost}.

\section{Construction}

In order to give a recursive construction of the class of
 uniform partitions into Hamming codes we exploit
the construction B from~\cite{S2007} based on the  classical
Mollard construction for perfect codes~\cite{Mollard} and the
results given in~\cite{Krotov,Phelps,SG2011}.   For the sake of
completeness we recall the definition of the Mollard construction
and the construction B.

Let $C^l$ and $C^t$
 be two   binary codes of lengths $l$
and $t$ respectively with the code distance not less than $3$
containing the all-zero vectors. Let
$$x=(x_{11},x_{12}, \ldots,x_{1t},
x_{21},\ldots,x_{2t},\ldots,x_{l1},\ldots,x_{lt}) \in\,F^{lt}.$$
We  use a matrix notation of vector $x$: the $i$th row of the
matrix  is equal to $x_{i1} \,\, x_{i2} \,\, \ldots \,\, x_{it}$,
where $i=1,\ldots,l$. Functions $p_1(x)$ and $p_2(x)$ are defined
as
$$
p_1 (x) = \Bigg( \sum_{j=1}^t x_{1j},\ldots, \sum_{j=1}^t x_{lj}
\Bigg) \in F^l,
$$
$$
p_{2}(x)=\Bigg(\sum_{i=1}^{l}x_{i1},\ldots,\sum_{i=1}^{l}x_{it}\Bigg)
\in  F^t.
$$
 Let $f$  be an
arbitrary function from $C^l$ to $F^t.$ The set
$$
C^n = \{ (x,y + p_1 (x),z + p_2 (x) + \,f(y))  \mid x\in F^{lt},\;
y\in C^l,\; z \in C^t\}
$$
is a binary {\it Mollard code} of length $n=lt+l+t$ with the  code
distance~3, see~\cite{Mollard}.

Let $C^l$ and $C^t$ be the Hamming codes of lengths $l$ and $t$
denoted as $H^l$ and $H^t$  respectively and  the function $f$ be
the constant function from $H^l$ to ${\bf 0}^t$. Then  we obtain
the Hamming code of length $n=lt+l+t$ by the Mollard construction
$$
 \{ (x,y + p_1 (x),z + p_2 (x)) \mid x\in F^{lt},\;
y\in H^l,\; z \in H^t\}.
$$

Below we use  a partial case of  the {\it construction B }
from~\cite{S2007}. It should be noted that in~\cite{S2007} the
construction~B was described for any binary
single-errors-correcting codes, not necessarily perfect.
 Let $P^l = \big\{
H_0^l,H_1^l+e_1, \ldots,H_l^l+e_l\big\}$ and
$P^t=\big\{H_0^t,H_1^t+e_1, \ldots,H_t^t+e_t\}$
 be arbitrary partitions of
$F^l$ and $F^t$ into cosets of Hamming codes $H_0^l,H_1^l,
\ldots,H_l^l$  and  $H_0^t,H_1^t, \ldots,H_t^t$ respectively. Then
the set of the codes \begin{equation} \label{constrBbase} \big\{
(x,y + p_1 (x),z + p_2 (x))  \mid x\in\,F^{lt},\;
y\in\,H_i^l+e_i,\;  z \in\,H_j^{t}+e_j \big\}, \end{equation} $ 0
\leq i \leq l$, \, $0 \leq j \leq t$ defines a {\it partition
$P^n$ of the space  $F^n$ into cosets of Hamming codes}
 of length $n=lt+l+t$, see~\cite{S2007}. The
 Hamming code  corresponding to (\ref{constrBbase}) we denote by $M^n(H_i^l,H_j^t)$
 to emphasize that this
 Hamming code of length $n$ is  obtained from the  Hamming codes $H^l_i$ and $H^t_j$ of
lengths $l$ and $t$ respectively.

\begin{lemma} (See \cite{SG2011}.) \label{transitivity} The construction B
applied for  $2$-transitive  partitions into perfect codes of
length $l$ and $t$ gives a  $2$-transitive partition into perfect
codes of length $l+t+lt$.
\end{lemma}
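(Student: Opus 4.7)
The plan is to lift automorphisms of $P^l$ and $P^t$ to automorphisms of $P^n$ via Construction~B, then combine them with an intrinsic symmetry of the Mollard construction to achieve 2-transitivity on the $n+1 = (l+1)(t+1)$ classes of $P^n$.

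First, I would construct a lifting from $\text{Aut}(P^l)\times \text{Aut}(P^t)$ into $\text{Aut}(P^n)$. Writing each isometry as a coordinate permutation followed by a translation, $\sigma(u) = \pi(u)+c$ and $\tau(v) = \rho(v)+d$, define the lift $\Sigma$ on $F^n = F^{lt}\oplus F^l\oplus F^t$ by permuting the rows of the $l\times t$ matrix $x$ by $\pi$ and its columns by $\rho$, then sending $u\mapsto\pi(u)+c$ and $v\mapsto\rho(v)+d$. Using the identities $p_1(\pi\rho\,x)=\pi(p_1(x))$ and $p_2(\pi\rho\,x)=\rho(p_2(x))$ one checks that $\Sigma$ is an isometry of $F^n$ carrying a Mollard codeword $(x,\,y+p_1(x),\,z+p_2(x))$ of $M^n(H_i^l,H_j^t)$ to a Mollard codeword with components $\sigma(y)$ and $\tau(z)$, hence lying in $M^n(H_{\hat\sigma(i)}^l, H_{\hat\tau(j)}^t)$, where $\hat\sigma,\hat\tau$ are the permutations of the index sets $\{0,\dots,l\}$ and $\{0,\dots,t\}$ induced by $\sigma,\tau$.

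With this lift in hand, given source and target ordered pairs of distinct index-pairs $((i_1,j_1),(i_2,j_2))$ and $((i_1',j_1'),(i_2',j_2'))$, one tries to find $(\sigma,\tau)$ with $\hat\sigma(i_k)=i_k'$ and $\hat\tau(j_k)=j_k'$. When $i_1\ne i_2$, $i_1'\ne i_2'$ and $j_1\ne j_2$, $j_1'\ne j_2'$, the two 2-transitivity hypotheses immediately produce $\sigma$ and $\tau$. Symmetric reasoning handles the subcases where only one of the two index coordinates varies in both source and target.

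The obstacle is the remaining case: the source has $i_1 = i_2$ but the target has $i_1'\ne i_2'$ (or the analogous asymmetry in the second coordinate). No product lift can realise this, because its action on index-pairs preserves the equality pattern in each coordinate separately. The proof must therefore exhibit a non-product automorphism of $P^n$ arising from a Mollard-intrinsic symmetry -- for instance a coordinate permutation of $F^n$ that transposes the $l\times t$ matrix $x$ while swapping the $y$- and $z$-blocks (available directly when $l=t$, and generalised through a more careful coordinate identification when $l\ne t$). The heart of the proof is to construct this extra symmetry and verify that it permutes the index-pairs in a genuinely non-product manner; combined with the product lifts, it then closes 2-transitivity of $\text{Aut}(P^n)$ on the classes of $P^n$.
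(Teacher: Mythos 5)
Your lifting step is correct and would be part of any proof of this lemma: $\Sigma(x,u,v)=(\pi\rho x,\,\pi u+c,\,\rho v+d)$ is indeed an isometry of $F^n$, the identities $p_1(\pi\rho x)=\pi(p_1(x))$ and $p_2(\pi\rho x)=\rho(p_2(x))$ hold, and $\Sigma$ carries the class $M^n(H_i^l,H_j^t)$ to $M^n(H^l_{\hat\sigma(i)},H^t_{\hat\tau(j)})$. You have also correctly diagnosed the crux: product lifts (and likewise all translations of $F^n$, which act on the index pairs coordinatewise) preserve the equality pattern of the two Mollard indices, so they can never achieve $2$-transitivity on the $(l+1)(t+1)$ classes by themselves. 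Note that the paper itself gives no proof of this lemma --- it is quoted from \cite{SG2011} --- so the comparison can only be against what a correct proof must contain.

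The genuine gap is that your proposed extra symmetry cannot supply the missing ingredient. First, the transposition simply does not exist when $l\ne t$: every isometry of $F^n$ is a coordinate permutation followed by a translation, and no permutation of coordinates can exchange a block of length $l$ with a block of length $t$; your ``more careful coordinate identification'' is left unconstructed, and $l\ne t$ is precisely the situation in every application in this paper ($l=3$, $t=7$ for the length-$31$ partition of Corollary 1; $l=7$, $t=2^{m-3}-1$ in the theorems). Second, and decisively, even when $l=t$ the transpose does not close the argument: every element of the group generated by your product lifts and the transpose has the shape $(i,j)\mapsto(\alpha(i),\beta(j))$ or $(i,j)\mapsto(\alpha(j),\beta(i))$, and both shapes preserve the nontrivial binary relation $R=\bigl\{\{(i_1,j_1),(i_2,j_2)\}\ :\ i_1=i_2 \ \text{or}\ j_1=j_2\bigr\}$ on the class set, whereas a $2$-transitive group preserves no nontrivial binary relation. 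Concretely, a source pair $\{(i,j_1),(i,j_2)\}$ with $j_1\ne j_2$ can never be carried to a target pair with $i_1'\ne i_2'$ and $j_1'\ne j_2'$ by any composition of your maps, so $2$-transitivity fails for the group you build. What a correct proof must produce --- and this is exactly the nontrivial content of the cited reference --- are automorphisms of $P^n$ whose induced action genuinely mixes the two index factors, e.g.\ shear-like maps $(i,j)\mapsto(i,\,j+L(i))$ under an identification of the index sets with $\F_2$-spaces, arising from coordinate permutations of $F^n$ that do not respect the row/column grid of the Mollard array. Your proposal identifies where such a map is needed but constructs none, so the step you yourself call the heart of the proof is not merely deferred; the specific candidate offered provably cannot work.
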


The
 partition of  $F^n$ into  cosets of any Hamming code of length
$n$  is  called  {\it trivial}, so in this case the uniformity
number $\log \eta_n$ is equal to the dimension of the code, i.e.
$n-\log (n+1)$.

\begin{lemma}  \label{uniformity} The construction B
applied for  uniform  partitions into Hamming codes of length $l$
and $t$ gives a  uniform partition into Hamming codes of length
$l+t+lt$.
\end{lemma}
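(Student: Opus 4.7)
The plan is to follow the proof of Lemma~\ref{transitivity} but at the level of intersection cardinalities: I identify the Hamming code $H^n_{ij}$ underlying each block $M^n(H^l_i,H^t_j)$ of $P^n$, compute $|H^n_{ij}\cap H^n_{i'j'}|$, and show that it factors as a product of the intersections in $P^l$ and $P^t$. Uniformity of the inputs then forces the product to be constant.

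First, I would verify that each block $M^n(H^l_i,H^t_j)$ is a coset of the Mollard Hamming code
\[ H^n_{ij} := \{(x,\, y+p_1(x),\, z+p_2(x)) \mid x\in F^{lt},\ y\in H^l_i,\ z\in H^t_j\}, \]
namely $M^n(H^l_i,H^t_j) = H^n_{ij} + (\mathbf{0},e_i,e_j)$. Since $H^l_i$ and $H^t_j$ are Hamming codes of lengths $l$ and $t$, the Mollard recipe with $f\equiv\mathbf{0}^t$ assembles $H^n_{ij}$ into a Hamming code of length $n=lt+l+t$, and the linear subspace of which a coset is a coset is uniquely determined by the coset; so $H^n_{ij}$ is indeed the underlying Hamming code of the $(i,j)$th block of $P^n$.

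Second, I would compute the pairwise intersection using the characterisation that $(x,u,v)\in H^n_{ij}$ exactly when $u+p_1(x)\in H^l_i$ and $v+p_2(x)\in H^t_j$ (given $x$, the components $y=u+p_1(x)$ and $z=v+p_2(x)$ are recovered uniquely). This gives
\[ H^n_{ij}\cap H^n_{i'j'} = \bigl\{(x,u,v) : u+p_1(x)\in H^l_i\cap H^l_{i'},\ v+p_2(x)\in H^t_j\cap H^t_{j'}\bigr\}, \]
and summing over $x\in F^{lt}$ (for each $x$, the admissible $u$ and $v$ range over translates of the respective intersections) yields
\[ |H^n_{ij}\cap H^n_{i'j'}| = 2^{lt}\cdot|H^l_i\cap H^l_{i'}|\cdot|H^t_j\cap H^t_{j'}|. \]
By uniformity of $P^l$ and $P^t$, the two factors are the constants $\eta_l$ and $\eta_t$, so the intersection is the single constant $\eta_n := 2^{lt}\eta_l\eta_t$, which proves uniformity of $P^n$.

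The main obstacle I anticipate is the identification in step~1: one must check that the shift $(\mathbf{0},e_i,e_j)$ is the correct translate taking $H^n_{ij}$ onto the coset $M^n(H^l_i,H^t_j)$, and that the resulting underlying Hamming code is not altered under the further relabeling into the standard form $\{H_0,H_1+e_1,\ldots\}$ with weight-one representatives. This reduces to a direct substitution using the linearity of $p_1$ and $p_2$; once it is in place, the multiplicative factorisation of step~2 together with the uniformity hypothesis finishes the proof.
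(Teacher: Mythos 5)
Your derivation is, in substance, exactly the argument the paper leaves implicit: the paper's proof consists of the single assertion $\log\eta_n=\log\eta_l+\log\eta_t+lt$, and your identification $M^n(H_i^l,H_j^t)=H^n_{ij}+(\mathbf{0},e_i,e_j)$, the membership criterion ($(x,u,v)\in H^n_{ij}$ iff $u+p_1(x)\in H_i^l$ and $v+p_2(x)\in H_j^t$), and the resulting product formula $|H^n_{ij}\cap H^n_{i'j'}|=2^{lt}\,|H_i^l\cap H_{i'}^l|\,|H_j^t\cap H_{j'}^t|$ are precisely the computation that substantiates it, so you are on the paper's route and your factorisation step is correct.

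There is, however, one step that does not go through as stated, namely the last one: ``by uniformity of $P^l$ and $P^t$, the two factors are the constants $\eta_l$ and $\eta_t$.'' This is only true for pairs of blocks with $i\neq i'$ \emph{and} $j\neq j'$. Two distinct blocks of $P^n$ may share an index: for $i=i'$, $j\neq j'$ your own formula gives $|H^n_{ij}\cap H^n_{ij'}|=2^{lt}\,|H_i^l|\,\eta_t$ with the full code cardinality $|H_i^l|=2^{\,l-\log(l+1)}$ in place of $\eta_l$ (and symmetrically for $j=j'$), and $|H_i^l|=\eta_l$ holds only when $P^l$ is the trivial partition into cosets of a single Hamming code. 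So the product is constant over \emph{all} pairs of distinct blocks only under an additional hypothesis (e.g.\ one of the two input partitions is trivial, as in the first two cases of Lemma~\ref{three cases}), or under the reading of uniformity in which constancy is required only for pairs with both indices differing. To be fair, the paper's one-line proof glosses over exactly the same point; but since you set out to verify constancy over all pairs, your proof as written needs either this case analysis or the restriction made explicit, and the anticipated obstacle you name (the shift $(\mathbf{0},e_i,e_j)$ and the relabeling) is not where the real difficulty lies.
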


\begin{proof}
Let $\log\eta_{l}$ and $\log\eta_{t}$ be the uniformity numbers of
two initial uniform partitions  into Hamming codes of lengths $l$
and $t$ respectively. Then it is easy to see that applying  the
construction B to both of these uniform partitions we obtain
 the uniform partition of length $n=l+t+lt$ with the
uniformity number $\log\eta_{n}= \log\eta_{l} + \log\eta_{t} +
lt.$

\end{proof}

In Lemma 3, see \cite{EV98}, one can find the simple fact that two
perfect codes of length $n$  intersected by  $s$ codewords exist
if and only if there exist two extended perfect codes of length
$n+1$ intersected by  $s$ codewords. Therefore the partitions of
$F^n$ into Hamming codes obtained from the uniform partitions
given in ~\cite{Krotov} by puncturing any coordinate position are
uniform. Recall that
 in~\cite{Krotov} the  uniform partitions of the set of all odd weight vectors of
$F^{n+1}$ into extended Hamming codes are given.  We will further
call such uniform partitions into cosets of  Hamming codes {\it
punctured}.

\begin{lemma} \label{three cases}
There are uniform partitions  of $F^n$ into cosets of Hamming
codes  of length \, $n\in \{31, \, 127, 1023\}$ \, with the
uniformity numbers \, $\log\eta_{31}=24$, \, $\log\eta_{127}=116$
and $\log\eta_{1023}=1007$ respectively.
\end{lemma}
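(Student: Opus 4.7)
The plan is to handle each of the three lengths using different tools: Krotov's construction for $n\in\{31,127\}$ and an iterated application of Lemma~2 (combined with Phelps' construction) for $n=1023$.

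For $n=31$ and $n=127$, i.e.\ $n=2^m-1$ with $m\in\{5,7\}$, I would invoke Krotov's construction from~\cite{Krotov}, which for each odd $m>3$ produces a uniform partition of the set of odd-weight vectors of $F^{2^m}$ into extended Hamming codes; the cases $m=5$ and $m=7$ give the uniformity numbers $24$ and $116$ respectively. Applying the puncturing procedure described in the paragraph preceding the lemma (based on Lemma~3 of~\cite{EV98}), which preserves the pairwise intersection size, one obtains uniform (punctured) partitions of $F^{31}$ and $F^{127}$ with $\log\eta_{31}=24$ and $\log\eta_{127}=116$.

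For $n=1023=2^{10}-1$ the exponent $m=10$ is even, so Krotov's construction is not directly available, and the plan is to apply Lemma~2 twice. Notice that the naive choice $(l,t)=(31,31)$ with the two Krotov partitions from the previous step would give $\log\eta_{1023}=24+24+31\cdot 31=1009$, which is too large. Instead, I first construct an auxiliary uniform partition of $F^{31}$ with the \emph{smaller} uniformity number $\log\eta_{31}=23$ by applying Lemma~2 with $l=3$ and $t=7$ to the trivial partition of $F^3$ (which has $\log\eta_3=1$, as it is the partition into cosets of the unique Hamming code $\{000,111\}$) and Phelps' uniform partition of $F^7$ from~\cite{Phelps} (which has the smallest possible value $\log\eta_7=1$). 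Lemma~2 yields
\[
\log\eta_{31} \;=\; 1+1+3\cdot 7 \;=\; 23.
\]
A second application of Lemma~2 with $l=t=31$ to two copies of this auxiliary partition gives
\[
\log\eta_{1023} \;=\; 23+23+31\cdot 31 \;=\; 1007,
\]
as required.

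The main subtlety is the last case: one must avoid the Krotov partition of $F^{31}$ (whose codimension $2$ inside the Hamming code is too small) and use instead the less efficient Mollard-Phelps partition of codimension $3$, so that the codimensions at the two factors $(31,31)$ add to the target codimension $1013-1007=6$. Once this choice is made, the rest is a direct computation via the additive formula $\log\eta_n=\log\eta_l+\log\eta_t+lt$ of Lemma~2, together with the cited results of Krotov, Phelps, and~\cite{EV98}.
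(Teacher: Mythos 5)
Your proposal contains two genuine errors, both stemming from incorrect uniformity numbers attributed to the cited constructions. First, Krotov's punctured partitions have the \emph{smallest} possible uniformity numbers, $\log\eta_n=n-2m+1$, which gives $\log\eta_{31}=22$ and $\log\eta_{127}=114$ --- not the required $24$ and $116$. (The paper states explicitly that the punctured Krotov partition of $F^{31}$ has $\log\eta_{31}=22$, and in the proof of Theorem~\ref{Uniform partitions} these partitions realize exactly the minimal case $e=1$.) So your treatment of $n=31$ and $n=127$ fails outright. The paper instead obtains both values via Lemma~\ref{uniformity}: the trivial partition of $F^3$ (with $\log\eta_3=1$) combined with Phelps' partition of $F^7$ (with $\log\eta_7=2$) gives $\log\eta_{31}=1+2+21=24$, and the trivial partition of $F^3$ combined with the punctured Krotov partition of $F^{31}$ gives $\log\eta_{127}=1+22+93=116$.

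Second, there is no uniform partition of $F^7$ with $\log\eta_7=1$: by Phelps' enumeration~\cite{Phelps}, quoted in the proof of Theorem~\ref{Uniform partitions}, there are exactly two uniform partitions of $F^7$, with $\log\eta_7=2$ (the minimum possible) and $\log\eta_7=4$ (the trivial one). Hence your auxiliary partition of $F^{31}$ with $\log\eta_{31}=23$ does not exist --- the combination you describe actually produces $1+2+21=24$ --- and your final computation $23+23+31\cdot 31=1007$ collapses. (Your parenthetical that Krotov's partition of $F^{31}$ has codimension $2$ inside the Hamming code is likewise off: it is $26-22=4$.) The paper reaches $n=1023$ through length $255$ instead: Phelps' partition of $F^7$ ($\log\eta_7=2$) with the punctured Krotov partition of $F^{31}$ ($\log\eta_{31}=22$) gives $\log\eta_{255}=2+22+217=241$, and then the trivial partition of $F^3$ with this partition gives $\log\eta_{1023}=1+241+765=1007$. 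Incidentally, your $(l,t)=(31,31)$ decomposition could be repaired with correct ingredients: taking the punctured Krotov partition ($22$) on one side and the uniformity-$24$ partition from the first case on the other yields $22+24+961=1007$, which would be a genuinely different route to the third case; but as written, your argument rests on uniformity numbers that the cited constructions do not provide.
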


\begin{proof}
 The case $\log\eta_{31}=24$ is covered applying  Lemma \ref{uniformity}
 to
the trivial partition for $l=3$ with $\log\eta_3=1$ and the
uniform partition from~\cite{Phelps}  for $t=7$ having
$\log\eta_7=2$. So  we have the uniform partition of length $31$
with the uniformity number $\log \eta_{31} = 1+ 2 +21= 24.$

The case $\log\eta_{127}=116$  is achieved by  Lemma
\ref{uniformity} utilizing the  trivial partition for $l=3$ with
$\log\eta_3=1$ and the punctured uniform partition for $t=31$
given in~\cite{Krotov} with the uniformity number
$\log\eta_{31}=22$: $\log \eta_{127} = 1+ 22 +93= 116.$

For the last case we need the uniform partition of length $255$
with $\log\eta_{255}=241$ that can be obtained by Lemma
\ref{uniformity} applying to the partition of length $l=7$
from~\cite{Phelps} with the uniformity number $\log\eta_7=2$ and
the punctured uniform partition for $t=31$ taken
from~\cite{Krotov} with $\log\eta_{31}=22$: $\log \eta_{255} = 2+
22 + 217= 241.$ Then in order to obtain $\log\eta_{1023}=1007$ for
$n=1023$ we apply again Lemma \ref{uniformity} to  the trivial
partition with $l=3$, $\log\eta_3=1$  and the obtained uniform
partition for  $t=255$ having the uniformity number
$\log\eta_{t}=241$: $\log \eta_{1023} = 1+ 241 + 765= 1007.$

\end{proof}

The $2$-transitivity of the uniform partition of length $7$ with
the uniformity number $\eta_{7}=2$ from \cite{Phelps} was proved
in \cite{SG2011}, see Lemma 1.
 The $2$-transitivity of the  uniform
partition of $F^{31}$ considered in Lemma \ref{three cases}
follows from this $2$-transitive uniform partition of length $7$
and $\eta_{7}=2$ and Lemma \ref{transitivity}, so the following
holds

\begin{corollary} \label{2-transitivity_cor}
There is a $2$-transitive uniform partition  of $F^{31}$ into
cosets of Hamming codes  of length \, $31$ \, with the uniformity
number \, $\log\eta_{31}=24$.
\end{corollary}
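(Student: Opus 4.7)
The plan is to observe that the uniform partition of $F^{31}$ constructed in Lemma \ref{three cases} is produced by applying construction B to two specific smaller partitions, and that both of these smaller partitions are already known to be $2$-transitive and uniform. Hence Lemma \ref{transitivity} together with Lemma \ref{uniformity} (applied to the same construction B simultaneously) will immediately yield the desired conclusion.

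First I would recall the ingredients of Lemma \ref{three cases} for the case $n=31$: the trivial partition of $F^3$ into cosets of the length-$3$ Hamming code $H^3=\{000,111\}$, having uniformity number $\log\eta_3=1$, and Phelps' uniform partition of $F^7$ with $\log\eta_7=2$. For the length-$7$ partition the $2$-transitivity is stated in \cite{SG2011} (Lemma 1) and may be invoked directly. For the length-$3$ partition I would give a one-line verification: the four cosets $H^3,\,H^3+e_1,\,H^3+e_2,\,H^3+e_3$ are permuted by the symmetric group $S_3$ acting on coordinates (which fixes $H^3$ and acts as the full symmetric group on the three nontrivial cosets) and by the translations $x\mapsto x+e_i$ (which exchange $H^3$ with $H^3+e_i$). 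Together these isometries generate a group acting as $S_4$ on the four cosets, hence $2$-transitively. Thus both initial partitions are $2$-transitive uniform partitions into cosets of Hamming codes.

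Next I would apply construction B to these two partitions exactly as in the proof of Lemma \ref{three cases}. By Lemma \ref{transitivity}, the resulting partition of $F^{31}$ (with $n=lt+l+t=3\cdot 7+3+7$) is $2$-transitive. By Lemma \ref{uniformity}, it is uniform with uniformity number
\[
\log\eta_{31}=\log\eta_3+\log\eta_7+l\cdot t=1+2+21=24,
\]
matching the partition already identified in Lemma \ref{three cases}.

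There is essentially no real obstacle: the work has been done in the preceding lemmas, and the only point deserving care is to confirm that the trivial partition of $F^3$ really is $2$-transitive (handled by the $S_4$-argument above) and that the Phelps partition used in Lemma \ref{three cases} is the same partition proved $2$-transitive in \cite{SG2011}. Once these two remarks are in place, the corollary is a one-line combination of Lemma \ref{transitivity} and Lemma \ref{uniformity}.
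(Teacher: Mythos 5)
Your proposal is correct and follows essentially the same route as the paper: the partition of $F^{31}$ from Lemma \ref{three cases} is obtained by construction B from the trivial partition of $F^3$ and Phelps' partition of $F^7$, whose $2$-transitivity (proved in \cite{SG2011}) combines with Lemma \ref{transitivity} and Lemma \ref{uniformity} to give the result. Your explicit $S_4$-verification that the trivial partition of $F^3$ is $2$-transitive is a correct detail that the paper leaves implicit, and it is a welcome addition since Lemma \ref{transitivity} requires both input partitions to be $2$-transitive.
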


\begin{theorem} \label{Uniform partitions}
For any $n=2^m-1$, $m>2$ and $e=1,2,\ldots,[(m+1)/2]$,  with the
exception of $m=4$, $e=1$, there exists a uniform partition  of
$F^n$ into cosets of Hamming codes
 of length $n$ with $\eta_n$
 satisfying
 \begin{equation}\label{eta} \log \eta_n=n-2m+2e-\delta(m),\end{equation}
where  $\delta(m)= \left\{ \begin{array}{l} 1\,\,\mbox{for $m \equiv  1 \pmod 2;$}\\
 0 \,\, \mbox{for $m \equiv  0 \pmod 2 .$}\end{array}\right.$
 \end{theorem}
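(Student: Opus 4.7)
The plan is to prove the theorem by strong induction on $m$, using Lemma~\ref{uniformity} as the recursive step and the partitions already cited in Section~2 as base cases.

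First I would record how the parameter $e$ in (\ref{eta}) transforms under construction B. Writing $l = 2^a - 1$, $t = 2^b - 1$, $n = l + t + lt$, $m = a + b$, and applying Lemma~\ref{uniformity} to uniform partitions realizing (\ref{eta}) with parameters $(a, e_1)$ and $(b, e_2)$, a short calculation over the four parity cases for $(a, b)$ shows that the resulting uniform partition realizes (\ref{eta}) with combined parameter
\[
e \; = \; \begin{cases} e_1 + e_2 - 1 & \text{if $a$ and $b$ are both odd,} \\ e_1 + e_2 & \text{otherwise.} \end{cases}
\]

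Next I would set up the base cases: for $m = 3$ the Phelps partition of $F^7$ gives $e = 1$ and the trivial partition gives $e = 2$; for $m = 4$ only $e = 2$ is required, supplied by the trivial partition. For the inductive step with $m \ge 5$ I would distinguish three subcases. When $e = 1$ and $m$ is odd, the punctured Krotov partition of length $2^m - 1$ directly realizes $\log\eta_n = n - 2m + 1$. When $e = 1$ and $m$ is even (so $m \ge 6$), I would combine the Phelps partition (parameters $(3, 1)$) with the partition of parameters $(m - 3, 1)$ available from the previous subcase; since $m - 3 \ge 3$ is odd, the two-odd-exponent rule gives $1 + 1 - 1 = 1$. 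When $e \ge 2$, I would combine the trivial partition of $F^3$ (parameters $(2, 1)$) with a partition of parameters $(m - 2, e - 1)$, available by induction; since $a = 2$ is even, the combined parameter is $1 + (e - 1) = e$. The only $(m, e)$ for which $(m - 2, e - 1)$ falls on the excluded pair $(4, 1)$ is $(6, 2)$, and for this case I would instead combine the Phelps $(3, 1)$ with the trivial $(3, 2)$ to obtain $1 + 2 - 1 = 2$ via the two-odd-exponent rule.

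The main obstacle---and the reason for the exception $(m, e) = (4, 1)$---is that $m = 4$ admits only the decomposition $2 + 2$, and both components carry only the trivial partition with $e_1 = e_2 = 1$, which forces the combined parameter to be $e = 2$ rather than $1$. Outside this single pair, the three-subcase analysis above covers every admissible $(m, e)$, and verifying (\ref{eta}) in each case reduces to the additivity formula derived at the start.
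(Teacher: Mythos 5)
Your argument is correct, and it reaches the theorem by a genuinely different recursion than the paper's. The paper runs a single induction with step $m\mapsto m-3$: it fixes the Phelps partition of length $7$ with $\log\eta_7=2$ on one side of construction B and varies the length-$(2^{m-3}-1)$ partition on the other, recomputing $\log\eta_n$ separately for $m$ even and $m$ odd; this scheme misses three stray pairs, which the paper patches beforehand in Lemma~\ref{three cases} ($\log\eta_{31}=24$, $\log\eta_{127}=116$, $\log\eta_{1023}=1007$) by bringing in the trivial partition of $F^3$. You instead first isolate the parity-dependent additivity law for the parameter $e$ --- which is correct: Lemma~\ref{uniformity} gives $\log\eta_n=\log\eta_l+\log\eta_t+lt=n-2m+2(e_1+e_2)-\delta(a)-\delta(b)$, and comparing $\delta(a)+\delta(b)$ with $\delta(a+b)$ yields exactly your two cases --- and then make the workhorse step $m\mapsto m-2$, gluing the trivial partition of $F^3$ onto $(m-2,e-1)$, reserving Phelps combined with an odd-exponent piece for $(m\ \text{even},\, e=1)$ and punctured Krotov~\cite{Krotov} for $(m\ \text{odd},\, e=1)$, as the paper also does. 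The trade-off is complementary: your scheme absorbs the paper's three special cases into the general step (for instance $(5,2)$ and $(7,2)$ fall out of your third subcase, reproducing the computations $1+2+21=24$ and $1+22+93=116$ of Lemma~\ref{three cases}) at the cost of a single ad hoc patch at $(m,e)=(6,2)$, which you resolve correctly via Phelps $(3,1)$ with trivial $(3,2)$; the paper's fixed-$l=7$ scheme handles $(6,2)$ automatically but needs the separate patch lemma. Three small points of precision, none affecting correctness: the trivial partition of $F^3$ has $\log\eta_3=1=3-4+2\cdot 1-0$, so it does satisfy (\ref{eta}) formally with $(a,e_1)=(2,1)$ even though $m=2$ lies outside the theorem's stated range, and this should be said explicitly since your additivity law is derived assuming both inputs satisfy (\ref{eta}); for $m=6$ your subcase (2) takes $(3,1)$ from the Phelps base case rather than ``from the previous subcase,'' since Krotov's construction exists only for odd $m>3$; and in subcase (3) the range check $e-1\le [(m+1)/2]-1=[(m-1)/2]$, confirming that $(m-2,e-1)$ is an admissible pair for the induction hypothesis, deserves a line, though it does hold.
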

\begin{proof}
 The proof will be done by induction on $m, m\geq 3$ exploiting  the
construction B, see Lemma \ref{uniformity}, and Lemma \ref{three
cases}. In the construction B  we fix the first uniform partition
$P^l$ with $l=3$ and $\log \eta_7=2$ from~\cite{Phelps} and vary
the second partition $P^t$, $t=2^{m-3}-1$ to be any uniform
partition of length $t$ including the trivial one. So the
corresponding Hamming codes are $M^n(H_i^7,H_j^{2^{m-3}-1}),$ \,
$i=0,1,2,\ldots,7, \,\, j=0,1,2,\ldots , 2^{m-3}-1.$  The approach
is valid with the exception of three special cases:
$\eta_{31}=24$, \, $\eta_{127}=116,$ \, $\eta_{1023}=1007$, where
$l=3$ and $\log \eta_7=1$ that were covered by Lemma \ref{three
cases}.

By the specification of the construction B  we need for induction
base the following three initial cases: $m=3,4$ and $5$.   For
$m=3$, i.e. $n=7$ there exist only two uniform partitions
described by Phelps in~\cite{Phelps}. The first one has the
uniformity number $\log \eta_7=2$, the second one is the trivial
partition into the cosets of any Hamming code of length $7$ with
the uniformity number $\log \eta_7=4.$ For $m=4$ we know only the
trivial partition. For $m=5$ there are known three nonequivalent
uniform partitions of length $31$: the punctured partition from
\cite{Krotov} with the uniformity number $22$, the partition with
the uniformity number $24$ from Lemma \ref{three cases} and the
trivial one.

Let the statement of the theorem hold for any number not greater
than $m-1$, so for partitions of length $2^{m-1}-1$. We prove that
it is valid for $m$, i.e. for  $n=2^m-1$. In the induction steps
of the proof of the theorem we will also use  trivial
 partitions into the cosets of any Hamming code of length
 $t$ by the reason that if one partition in the construction B
  is nontrivial uniform partition and another one is  trivial
 then the resulting partition will be   uniform  nontrivial.

 \smallskip

Let  $m$ be even. So  $m-3$ is  odd and  by (\ref{eta}) and
induction hypothesis  we have $\delta(m-3)=1$ and the following
admissible values for uniformity number  $\log \eta_{t}$ for
partitions of length $t=2^{m-3}-1$:
$$\log \eta_{t}=t-2(m-3)+2e^{\prime}-1, \,\, e^{\prime}=
1,2,\ldots, [(m-2)/2].$$ Applying the construction B and taking
into account that $n=7+t +7t$ we calculate the uniformity number
for the partition (\ref{constrBbase}):
\begin{multline*}
\log \eta_n = \log (|M^{n}(H_i^7,H_j^{t})\cap
M^{n}(H_r^7,H_s^{t})|) \\  =2+ (t-2(m-3)+2e^{\prime}-1) +7t
\\ =2 + 8(2^{m-3}-1) - 2(m-3) +2e^{\prime} - 1  =n-2m+2e,\end{multline*}
\noindent where $e=e^{\prime}=1,2,\ldots,(m-2)/2,$ \,\,
$i,r=0,1,2,\ldots,7, \,\, j,s=0,1,2,\ldots , 2^{m-3}-1.$ We obtain
$|M^{n}(H_i^7,H_j^{t})\cap M^{n}(H_r^7,H_s^{t})|$ to be constant
regardless of the choice of \, $i,j,r,s$,  so we construct all the
required uniform partitions of length $n$, i.e. (\ref{eta}), since
there is the trivial partition of length $n$ with the case
$\eta_n=n-m$, i.e. $e=m/2=[(m+1)/2].$

\medskip

Let $m$ be odd. This case  is analogous to the previous one taking
into account that $m-3$ is even and so $\delta(m-3)=0$ in
(\ref{eta}). Therefore we have
$$\log \eta_{t}=t-2(m-3)+2e^{\prime}, \,\mbox{where} \,\, e^{\prime}=
1,2,\ldots, [(m-2)/2].$$  Then for any \,\, $i,r=0,1,2,\ldots,7,
\,\, j,s=0,1,2,\ldots , 2^{m-3}-1$ we have
\begin{multline*}
\log \eta_n = \log (|M^{n}(H_i^7,H_j^{t})\cap
M^{n}(H_r^7,H_s^{t})|) \\ =2+ t-2(m-3)+2e^{\prime} +7t
\,\,\,\,\,\,\,\,\,\,\,\,\,
\\   =2 + 8(2^{m-3}-1) - 2(m-3) +2e^{\prime}=n-2m+2e-1,\end{multline*}
\noindent where $e=e^{\prime}+1=2,3,4,\ldots,(m-1)/2.$
  We obtain (\ref{eta}) adding the first minimal uniformity number given by
the Krotov punctured construction \cite{Krotov}, i.e. the case
when $\log \eta_n =n-2m+1$ with $e=1$, and the trivial uniform
partition into cosets of a Hamming code of length $n$ with
$\eta_n=n-m$, i.e. $e=(m+1)/2$.

\end{proof}

Extended codes or partitions are often objects with larger
automorphism groups, see, for example, \cite{Sol}. It is easy to
see that extending by parity check any uniform partition of length
$n$ gives the uniform partition of the set of all even weight
vectors in $F^{n+1}$. It is clear that any two partitions of $F^n$
into cosets of Hamming codes of length $n$ from Theorem
\ref{Uniform partitions}
 are nonequivalent since
by the construction  they have different uniformity numbers. So
the following holds

\begin{corollary}\label{coro numbers}
For any $n=2^m-1$, $m>2$,  with the exception $m=4$, there exist
at least $[(m+1)/2]$  nonequivalent uniform  partitions of $F^n$
 into cosets
of Hamming codes of length $n$ and of the set of all even weight
vectors in $F^{n+1}$ into cosets of extended Hamming codes of
length $n+1$.
\end{corollary}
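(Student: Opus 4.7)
The plan is to read the corollary off directly from Theorem~\ref{Uniform partitions} together with two invariance observations, one for the partitions themselves and one for their parity-check extensions. First I would fix $m>2$, $m\neq 4$, and invoke Theorem~\ref{Uniform partitions} to produce, for every $e\in\{1,2,\ldots,[(m+1)/2]\}$, a uniform partition of $F^n$ into cosets of Hamming codes with
$$\log\eta_n=n-2m+2e-\delta(m).$$
Since $\delta(m)$ is fixed once $m$ is fixed, this expression is strictly increasing in $e$, so the $[(m+1)/2]$ resulting partitions carry pairwise distinct uniformity numbers. (The exceptional pair $m=4,\,e=1$ excluded by Theorem~\ref{Uniform partitions} never enters the argument because the corollary itself excludes $m=4$.)

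The key invariance lemma I would state and use is that the uniformity number $\log\eta_n$ is an equivalence invariant of a uniform partition. If $\sigma\in\iso(F^n)$ sends one uniform partition onto another, then $\sigma$ permutes the codes of the first partition to the codes of the second and preserves intersection sizes: $|\sigma(H_i)\cap\sigma(H_j)|=|H_i\cap H_j|$. Therefore two uniform partitions with different values of $\eta_n$ cannot be equivalent, and the $[(m+1)/2]$ partitions produced above are pairwise nonequivalent.

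For the extended case I would use the overall parity-check extension $c\mapsto (c,\mathrm{par}(c))$. This map is a bijection from each Hamming code $H_i$ (and from each coset $H_i+e_i$) onto its extension, and it commutes with intersection: $c\in H_i\cap H_j$ if and only if $(c,\mathrm{par}(c))$ lies in the intersection of the corresponding extended codes. Hence extending any uniform partition of $F^n$ coordinatewise yields a uniform partition of the set of even-weight vectors of $F^{n+1}$ into cosets of extended Hamming codes, with exactly the same $\eta_n$. Applying the same invariance argument in $F^{n+1}$ (isometries of $F^{n+1}$ preserve intersection sizes of the extended codes just as before) gives $[(m+1)/2]$ nonequivalent extended uniform partitions.

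There is no real obstacle in this proof; the statement is essentially a packaging of Theorem~\ref{Uniform partitions} with the easy fact that the intersection profile of a partition is an isometry invariant. The only point worth writing out carefully is the observation that the parity-check extension preserves intersection sizes, which ensures that the count $[(m+1)/2]$ transfers verbatim to the partitions of the even-weight vectors of $F^{n+1}$.
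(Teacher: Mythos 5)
Your proposal is correct and follows essentially the same route as the paper: the paper likewise derives the corollary directly from Theorem~\ref{Uniform partitions}, observing that the constructed partitions have pairwise distinct uniformity numbers (hence are nonequivalent) and that extension by an overall parity check carries a uniform partition of $F^n$ to a uniform partition of the even-weight vectors of $F^{n+1}$. Your write-up merely makes explicit the two invariance facts the paper dismisses with ``it is clear'' and ``it is easy to see,'' which is a welcome but not substantively different elaboration.
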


Note that some of the partitions constructed in Theorem
\ref{Uniform partitions} are $2$-transitive. The $2$-transitivity
may not hold for the construction B applied to punctured
partitions from \cite{Krotov} as we do not know whether the latter
are $2$-transitive or not.

\begin{theorem} \label{2-transitive Uniform partitions}
For any $n=2^m-1$, $m\geq 6$  there exist at least $[m/3]$
nonequivalent $2$-transitive uniform partitions  of $F^n$ into
cosets of Hamming codes
 of length $n$. For $m=3$ there exist two and for $m=5$ there exist at least two nonequivalent such
 partitions.
  \end{theorem}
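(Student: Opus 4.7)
The plan is to argue by induction on $m$, reducing the problem for $n = 2^m-1$ to the smaller length $t = 2^{m-3}-1$ by applying the construction B of Section~2 with first factor $l = 7$. Lemmas~\ref{transitivity} and~\ref{uniformity} guarantee that construction B simultaneously preserves $2$-transitivity and uniformity, and the resulting uniformity number is $\log\eta_n = \log\eta_l + \log\eta_t + 7t$. Since two partitions with distinct uniformity numbers are automatically nonequivalent, this is the only tool I would use to both produce the partitions and distinguish them.

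Write $N(m)$ for the number of nonequivalent $2$-transitive uniform partitions of $F^{2^m-1}$. For the base of the induction I would establish $N(3)\geq 2$, $N(4)\geq 1$ and $N(5)\geq 2$. The nontrivial candidates are already available: Phelps's partition of $F^7$ with $\log\eta_7 = 2$, whose $2$-transitivity is recorded in \cite{SG2011}, and the partition of $F^{31}$ from Corollary~\ref{2-transitivity_cor} with $\log\eta_{31} = 24$. It remains to verify that the trivial partition of $F^{2^m-1}$ into cosets of a single Hamming code is $2$-transitive for every $m\geq 2$: its automorphism group contains every translation of $F^n$ together with every coordinate permutation preserving the code, and the induced action on the $2^m$ syndrome-cosets is the affine group $\mathrm{AGL}_m(2)$, which is $2$-transitive on $F^m$. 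I expect this verification to be the main conceptual point, although it is essentially folklore; the case $m=3$ of the theorem then follows because Phelps's classification gives exactly the two partitions we have listed.

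For the inductive step ($m\geq 6$) I would fix the nontrivial partition of $F^7$ with $\log\eta_7 = 2$ and push each of the $N(m-3)$ induction-provided $2$-transitive uniform partitions of $F^t$ through construction B. This yields $N(m-3)$ partitions of $F^n$ with the pairwise distinct uniformity numbers $2 + \log\eta_t^{(i)} + 7t$, hence pairwise nonequivalent. I would then adjoin the trivial partition of $F^n$; its uniformity number $n-m$ strictly exceeds the maximum $2 + (t-(m-3)) + 7t = n-m-2$ attained by the previous list, providing one further nonequivalent partition. This gives the one-step recursion $N(m)\geq N(m-3)+1$.

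Iterating this recursion along the three residue classes of $m$ modulo $3$ and starting from the respective base cases yields $N(3j)\geq j+1$, $N(3j+1)\geq j$ and $N(3j+2)\geq j+1$ in the ranges needed, each of which is at least $[m/3]$. The only ingredient that is not essentially mechanical is the $2$-transitivity of the trivial partitions; once that is in hand, the remainder is a routine application of Lemmas~\ref{transitivity} and~\ref{uniformity} followed by a comparison of uniformity numbers.
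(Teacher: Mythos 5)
Your proposal is correct and follows essentially the same route as the paper: the paper proves this theorem by rerunning the induction of Theorem \ref{Uniform partitions} restricted to the $2$-transitive ingredients (the Phelps partition of length $7$, the partition of Corollary \ref{2-transitivity_cor}, and the trivial partitions), invoking Lemma \ref{transitivity} for $2$-transitivity and separating the resulting partitions by their uniformity numbers --- exactly your recursion $N(m)\geq N(m-3)+1$ with base cases at $m=3,4,5$, which yields the count $[m/3]$. The only difference is explicitness: you supply a proof of the (folklore) $2$-transitivity of the trivial partition via the induced $\mathrm{AGL}_m(2)$ action on the syndrome cosets, a fact the paper relies on implicitly (both in Corollary \ref{2-transitivity_cor}, whose length-$3$ factor is trivial, and in the $m\equiv 1\pmod 3$ chain starting from the trivial partition of $F^{15}$) without spelling it out.
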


 The
proof of this  theorem is the same as that for  Theorem
\ref{Uniform partitions}.  The $2$-transitivity of all obtained
uniform partitions  follows by Lemma \ref{transitivity}, Corollary
\ref{2-transitivity_cor} and the $2$-transitivity of the initial
partition of length $7$ with the uniformity number  $\eta_7=2$
given by \cite{Phelps}. The $2$-transitivity of the latter one was
proved in \cite{SG2011}. Then for $n\geq 6$ there exist at least
$[m/3]$ nonequivalent $2$-transitive partitions among $[(m+1)/2]$
nonequivalent uniform partitions from Theorem \ref{Uniform
partitions}  having different and the largest uniformity numbers.
For $m=3$ there exist two and for $m=5$ there exist at least two
nonequivalent such  partitions.

Extending by parity check a $2$-transitive uniform partition of
length $n$  leads  to the $2$-transitive uniform partition of the
set of all even weight vectors in $F^{n+1}$. In the next corollary
we take into account that partitions in \cite{Krotov} are
$2$-transitive.

\begin{corollary}\label{extending case}
For any $n+1=2^m$, $m\geq 6$ there exist at least $[m/3] +  1$
nonequivalent  $2$-transitive uniform partitions of the set of all
even weight vectors in $F^{n+1}$ into cosets of extended Hamming
codes of length $n+1$. For $m=3$ there exist two and for $m=5$
there exist at least three nonequivalent such
 partitions.
\end{corollary}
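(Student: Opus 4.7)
The strategy is to combine the partitions produced by Theorem \ref{2-transitive Uniform partitions} with the Krotov partition of \cite{Krotov} via the parity-check extension. The preliminary observation, already flagged in the paragraph preceding the corollary, is that extending each code of a uniform partition $\{H_0,H_1+e_1,\ldots,H_n+e_n\}$ of $F^n$ by a parity-check bit yields a partition of the set of even-weight vectors in $F^{n+1}$ into cosets of extended Hamming codes; since extension is a bijection on codewords, pairwise intersection sizes are preserved, and so is the uniformity number. Moreover, every coordinate permutation that preserves a Hamming code lifts to an automorphism of its extension by fixing the appended coordinate, so parity-check extension also preserves the $2$-transitivity of the partition.

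For $m \geq 6$ I would then apply Theorem \ref{2-transitive Uniform partitions} to obtain at least $[m/3]$ pairwise nonequivalent $2$-transitive uniform partitions of $F^n$, carrying the $[m/3]$ largest admissible uniformity numbers $n - 2m + 2e - \delta(m)$ from (\ref{eta}). Extending each of them by parity check produces $[m/3]$ pairwise nonequivalent $2$-transitive uniform partitions of the set of even-weight vectors in $F^{n+1}$, inheriting those same $[m/3]$ distinct uniformity numbers.

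As the $([m/3]+1)$-st partition I would take the Krotov partition of \cite{Krotov}, which is $2$-transitive and uniform with the minimum uniformity number $n - 2m + 1$, realizing the case $e=1$ of (\ref{eta}). Since this value is strictly smaller than every uniformity number produced in the previous step, the Krotov partition is inequivalent to each of them, yielding at least $[m/3]+1$ nonequivalent $2$-transitive uniform partitions altogether. The small cases are handled in the same spirit: for $m=3$, I would extend the two $2$-transitive uniform partitions of $F^7$ from \cite{Phelps} (the Phelps partition and the trivial one) to obtain two partitions of the even-weight vectors in $F^8$; for $m=5$, I would extend the partition from Corollary \ref{2-transitivity_cor} and the trivial partition of $F^{31}$, and adjoin the Krotov partition of the even-weight vectors in $F^{32}$, obtaining three.

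The main obstacle is essentially only the preservation claim of the first paragraph: one must check that parity-check extension indeed turns a $2$-transitive uniform partition into another one of the same uniformity number. Once this is in place, nonequivalence is automatic, as each partition in the list is distinguished by its uniformity number, and the relevant values from (\ref{eta}) together with $n-2m+1$ are manifestly pairwise distinct.
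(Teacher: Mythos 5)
Your proposal coincides with the paper's own (implicit) argument for Corollary \ref{extending case}: extend the $[m/3]$ partitions of Theorem \ref{2-transitive Uniform partitions} by parity check, noting that extension preserves the uniformity number and $2$-transitivity, adjoin the $2$-transitive Krotov partition from \cite{Krotov} as the extra one with the strictly smaller uniformity number $n-2m+1$, and distinguish all partitions by their pairwise distinct uniformity numbers, treating $m=3$ and $m=5$ exactly as the paper does. You moreover spell out the verification (extension is a bijection on codewords, so intersection sizes are preserved, and automorphisms lift by fixing the appended coordinate) that the paper only asserts in the paragraph preceding the corollary.
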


{\bf Remarks}. It should be noted that Theorem \ref{Uniform
partitions}  covers a half
 of possible values of the numbers $\eta_n$. Another part is
still open, so the problem of finding of all
 numbers $\eta_n$ is  still open as far as the problem
of the description of all nonequivalent uniform partitions of
$F^n$ into cosets of Hamming codes of length $n$. This part of
 values of the numbers $\eta_n$ could be covered using the
 technique above if we found, for example,  the uniform partition in  $F^{15}$ with the uniformity
number 9 (if such partition exists). It should also be noted that
in the proof of the theorems it is possible to choose  other
variations of the initial partitions into the construction B.
Speaking more preciously one can consider $l>3$, that could give
different (or perhaps nonequivalent) partitions of length $n$ with
the same uniformity number.

\medskip
The author is very grateful  to the anonymous referee for the
useful remarks.

\end{document}